
\documentclass[letterpaper, 10 pt, conference]{ieeeconf}  

\IEEEoverridecommandlockouts                              

\overrideIEEEmargins                                      

\usepackage{color}
\usepackage{amsmath,amsfonts}
\usepackage{bbm}
\usepackage{amsmath,bm}
\usepackage{accents}
\usepackage[noadjust]{cite}

\usepackage{color}
\usepackage[dvipsnames]{xcolor}
\usepackage{tikz-timing}
\usepackage{tikz}
\usepackage{subfigure}
\usetikzlibrary{arrows,positioning,automata}

\newtheorem{theorem}{Theorem}
\newtheorem{definition}{Definition}

\newtheorem{assumption}{Assumption}
\newtheorem{proposition}{Proposition}
\newtheorem{corollary}{Corollary}
\newtheorem{remark}{Remark}

\newcommand{\diag}{{\rm \;diag}}

\newcommand{\mat}[1]{\mathbf{#1}\;}
\newcommand{\R}{\mathbb{R}}
\newcommand{\CR}{\mathcal{R}}

\newcommand{\N}{\mathcal{N}}
\newcommand{\rot}{{\rm{rot}}}
\newcommand{\Rot}{{\rm{Rot}}}

\title{\LARGE \bf
	Stability of Remote Synchronization\\ in Star Networks of Kuramoto Oscillators
}

\author{Yuzhen Qin, Yu Kawano, Ming Cao
	\thanks{The authors are with the Jan C. Willems Center for Systems and Control, Faculty of Science and Engineering, University of Groningen, Groningen, the Netherlands (\{y.z.qin, y.kawano, m.cao\}@rug.nl). The work  was supported in part by China Scholarship Council, the European Research Council (ERC-CoG-771687) and the Netherlands Organization for Scientific Research (NWO-vidi-14134). 
	}	
}

\begin{document}

\maketitle
\thispagestyle{empty}
\pagestyle{empty}

\begin{abstract}
	 Synchrony of neuronal ensembles is believed to facilitate information exchange among cortical regions in the human brain. Recently, it has been observed that distant brain areas which are not directly connected by neural links also experience synchronization. Such synchronization between remote regions is sometimes due to the presence of a mediating region connecting them, e.g., \textit{the thalamus}. The underlying network structure of this phenomenon is star-like and motivates us to study the \textit{remote synchronization} of Kuramoto oscillators, {modeling neural dynamics}, coupled by a directed star network, for which peripheral oscillators get phase synchronized, remaining the accommodating central mediator at a different phase. We show that the symmetry of the coupling strengths of the outgoing links from the central oscillator plays a crucial role in enabling stable remote synchronization. We also consider the case when there is a phase shift in the model which results from synaptic and conduction delays. Sufficient conditions on the coupling strengths are obtained to ensure the stability of remotely synchronized states. To validate our obtained results, numerical simulations are also performed.
\end{abstract}

\section{INTRODUCTION}
Synchronization is a ubiquitous phenomenon which has been observed pervasively in many natural, social and man-made systems \cite{pikovsky2003synchronization, proskurnikov2017synchronization,strogatz2000kuramoto,xia2018determination}. Remarkable examples include synchronized flashing of fireflies, pedestrian footwalk synchrony on London's Millennium Bridge, and phase synchronization of coupled Josephson junction circuits \cite{strogatz2005theoretical,neda2000physics,marvel2009invariant}.  After it was first proposed in 1975, the Kuramoto model has become one of the most widely-accepted models in understanding such synchronization phenomena in large population of oscillators  \cite{kuramoto1975self}. It is {idealized to allow for detailed mathematical analysis}, yet sufficiently capable to capture rich sets of behaviors, and thus has been extended to many variations \cite{sakaguchi1986soluble}. As powerful tools for understanding synchronization patterns emerged in human brain, the Kuramoto model and its generalizations have also fascinated researchers in neuroscience.  Actually, synchronization across cortical regions in human brain has been believed to be potential mechanism facilitating information exchange demanded by cognitive tasks \cite{womelsdorf2007modulation}. Cohesively oscillating neuronal ensembles can communicate effectively because their input and output windows are open at the same time \cite{fries2005mechanism}. Empirically, structural connections of neuronal ensembles are believed to play essential roles in rendering synchronization of cortical regions. However, it has been observed that distant cortical regions without direct neural links also experience functional correlations \cite{vuksanovic2014functional}. This motivates researchers to study an interesting behavior dubbed \textit{remote synchronization}, which is a situation where oscillators get synchronized although there is no direct physical links nor intermediate sequence of synchronized oscillators connecting them \cite{gambuzza2013analysis}. Unlike what is pointed out in most findings that the coupling strengths in a network are critical for synchronization of coupled oscillators \cite{chopra2009exponential,dorfler2011critical,Qin2018}, a recent article reveals that morphological symmetry is crucial  for remote synchronization \cite{nicosia2013remote}. Some nodes located distantly in a network can mirror their functionality between each other. In other words, theoretically, swapping the positions of these nodes will not change the functioning of the overall system. 

A star network is a simple paradigm for such networks with morphologically symmetric properties. The peripheral nodes have no direct connection, but obviously play similar roles in the whole network. The node at the center acts as a relay or mediator. As an example, the thalamus is such a relay in neural networks, and it is believed to enable separated cortical areas to be completely synchronized \cite{vicente2008dynamical,gollo2010dynamic}. This observation of robust correlated  behavior taking place in distant cortical regions through relaying  motivates us to study the stability of remote synchronization in star networks in this paper. A star network is simple in structure, {but capable of} characterizing some {basic} features of remote synchronization, and also provides some idea to understand this phenomenon in more complex networks. 
Different from \cite{bergner2012remote}, we use {Kuramoto-Sakaguchi model} \cite{sakaguchi1986soluble} to describe the dynamics of coupled oscillators, {and analytically study the stability of remote synchronization.} 

The contribution of this paper is twofold. First, we consider the more challenging setup where the star network is directed, in contrast to the undirected networks studied in many existing results such as \cite{jadbabaie2004stability,wang2013exponential,franci2010phase}. We obtain sufficient conditions to facilitate asymptotically stable remote synchronization between peripheral oscillators {when there is no phase shift}. The symmetry of the coupling strengths of outgoing links from the central oscillator is shown to be crucial for remote synchronization. In sharp contrast, the coupling strengths of incoming links to the central oscillator are not required to be symmetric. {It can be intuitively paraphrased that the mediator at the central position is able to render the oscillators around it synchronized by imposing a common input to them, without requiring the feedback coming back to be identical. This finding shares some similarities with the common-noise-induced synchronization investigated by researchers in physics \cite{teramae2004robustness,zhou2002noise,nakao2007noise}. However, different from the phase reduction or averaging methods used in these related works, we provide a different proof for the local asymptotic stability of the remote synchronization, and more importantly, we study  network-coupled, not isolated, oscillators and derive conditions on the network to enable synchronization between separated oscillators.}
Second, we take a \textit{phase shift} into consideration. This phase shift is often used to model synaptic and conduction delays resulting from distant connections between remote brain regions \cite{hoppensteadt2012weakly}. 
Sufficient conditions on the coupling strengths are obtained to ensure the stability of remote synchronization. We show that the presence of a phase shift raises the requirement for the coupling strengths. The rest of this paper is organized as follows.

Section \ref{section:2} introduces the model we employ and formulates the problem formally. In Section \ref{section:3}, we consider the case where there is no phase shift. Sufficient conditions are obtained to guarantee local stability of remote synchronization. A phase shift is introduced to the model in Section \ref{section:4}. We obtain sufficient conditions under which the remote synchronization is stable. Section \ref{simulation} contains our numerical  studies. Finally, we draw the conclusion in Section \ref{section:5}.

\section{Problem Formulation} \label{section:2}
Synchronization of distant cortical regions having \textit{no direct links} has been observed in human brain. The emergence of this phenomenon is sometimes due to a mediator or relay that connecting separated regions, e.g., the thalamus \cite{gollo2010dynamic}. Motivated by this, we study remote synchronization by considering $n+1$, $n\ge 2$, oscillators, coupled by a star network, which are labeled by $0,1,\dots,n$. Let $\N=\{1,\dots,n\}$ be the set of indices of the peripheral oscillators.  The central mediator is labeled by $0$. Let $\mathbb{S}^1$ be the unit circle, and denote $\mathbb{S}^n=\mathbb{S}^1 \times \cdots \times \mathbb{S}^1$. The dynamics of each oscillator is described by 
\begin{equation}
\begin{array}{l}
	\displaystyle\dot \theta_0 = \omega_0+  \sum_{i=1}^{n} K_{i} \sin (\theta_i-\theta_0-\alpha),\\
	\displaystyle \dot \theta_i=\omega+A_i\sin(\theta_0-\theta_i-\alpha),i=1,2,\dots,n, \label{Md:overall}
\end{array}
\end{equation}
where $\theta_i\in \mathbb{S}^1$ is the phase of the $i$th oscillator, and $\omega_0$ and $\omega$ are the natural frequencies of the central and peripheral oscillators, respectively. Here $K_i>0$ is the coupling strength from the peripheral node $i$ to the central node $0$ (for which we refer to as \textit{incoming}  (with respect to $0$) coupling strengths),  and $A_i>0$ presents the directed coupling strength from the central node $0$ and the peripheral node $i$ (for which we refer to as \textit{outgoing}  (with respect to $0$) coupling strengths). It is worth mentioning that incoming and outgoing couplings are allowed to be different, which means that the underlying star network, denoted by $\mathcal G$, is directed. The term $\alpha$ is the phase shift satisfying $\alpha\in[0,\pi/2)$. 
In the star network considered in this paper, remote synchronization is the situation where some of the peripheral oscillators are phase synchronized, while the phase of the central mediator $0$ connecting them can be different. We define remote synchronization formally as follows.

\begin{definition}
	Let $\theta(t)=[\theta_0(t),\dots,\theta_n(t)]^\top \in \mathbb{S}^{n+1}$ be a solution to the system dynamics \eqref{Md:overall}. Let $\CR$ be a subset of $\N$, whose cardinality satisfies $2\le|\CR|\le n$. We say that the solution $\theta(t)$ is \textit{remotely synchronized} with respect to $\CR$ if for every pair of indices $i,j\in \CR$ it holds that $\theta_i(t)=\theta_j(t)$ for all $t\ge 0$, {but it is not required that $\theta_i(t)=\theta_0(t)$.}
\end{definition}

When $\CR\subset\N$, we say that $\theta(t)$ is \textit{partially} remotely synchronized; in particular, when $\CR=\N$, we say that $\theta(t)$ is \textit{completely} remotely synchronized, for which situation we refer to as remote synchronization for brevity in what follows. A particular case of partially remotely synchronized solution is remote cluster synchronization, which is defined as follows.  
\begin{definition}
	Let $ \mathcal C=\{\mathcal C_1,\dots,\mathcal C_m\}, 2\le m<n$ be a partition of $\N$. The sets $\mathcal C_1,\dots,\mathcal C_m$ are non-overlapping and satisfy $1 \le|\mathcal C_p|< n$ for all $p$ and $ \cup_{p=1}^{m}\mathcal C_p=\N$. A  {partially remotely synchronized} solution $\theta(t)$ to the system dynamics \eqref{Md:overall} is said to be \textit{remotely clustered} with respect to $\mathcal C$ if: for any given $\mathcal C_p$ and {every pair} $i,j\in \mathcal C_p$ there holds that $\theta_i(t)=\theta_j(t), \forall t\ge 0$; on the other hand, for any given $i\in \mathcal C_p, j \in \mathcal C_q$ where $p\neq q$, $\theta_i(t)\neq\theta_j(t)$.
\end{definition}

Note that the trivial case when a cluster has only one oscillator is allowed. 
In this paper, we are exclusively interested in the (partial) remote synchronization when the frequencies of all the oscillators in the network are synchronized, although it has been observed that remote synchronization can occur without complete frequency synchronization \cite{vlasov2015star}. In fact, a solution $\theta(t)$ is said to be frequency synchronized if $\dot \theta_0(t)=\dot \theta_1(t)=\cdots=\dot \theta_n(t)=\omega_{\rm syn} $
 for some $\omega_{\rm syn}\in\R$. For a given $r\in \mathbb{S}^1$ and an angle $\gamma\in[0,2\pi]$,  let $\rot_\gamma (r)$ be the rotation of $r$ counter-clockwise by the angle $\gamma$. For $\theta \in \mathbb{S}^n$, we define an equivalence class $\Rot(\theta):=\{[\rot_\gamma (\theta_1),\dots,\rot_\gamma (\theta_n)]^\top:\gamma \in[0,2\pi]\}$. Let $\theta^*$ be a solution to the equations 
 \begin{align}
 &\omega_{0}-\omega-\nonumber\\
 &\sum_{j=1}^{n}K_i\sin (\theta_j-\theta_0-\alpha)- A_i\sin (\theta_0-\theta_i-\alpha)=0,
 \end{align}
 for $ i= 1,2,\dots,n$, which is a solution such that frequency synchronization is reached. It is not hard to see that $[\rot_\gamma (\theta^*_1),\dots,\rot_\gamma (\theta^*_n)]^\top$ for any $\gamma\in[0,2\pi]$ is also a solution to the equations. Consequently, the set $\Rot(\theta^*)$ is said to be a \textit{synchronization manifold} for the dynamics \eqref{Md:overall} \cite{dorfler2014synchronization}. As an extension of the definition of the synchronization manifold in \cite{dorfler2013synchronization}, we call $\Rot(\theta^*)$ (partial) \textit{remote synchronization manifold} if there exists a set ($\CR \subset \N$) $\CR =\N$  such that $\theta^*_i=\theta^*_j$ for any pair $i,j \in\CR$. In order to study the stability of the (partial) remote synchronization manifold, it suffices to study the stability of $\theta^*$. 
 In the next two sections, we investigate the local stability of remote synchronization manifolds.
  We start with the assumption that there is no phase shift in Section \ref{section:3}. The local stability of the remote and cluster synchronization manifolds is studied. In Section \ref{section:4}, we consider there is a phase shift $\alpha$ and investigate the influence of this phase shift on the stability of the remote synchronization manifold.

\section{Remote Synchronization without Phase Shift}\label{section:3}
In this section, we consider the case when there is no phase shift, i.e., $\alpha=0$. We investigate how partial and complete remote synchronization in star networks are formed. We show the important roles that the symmetric outgoing couplings quantified by $A_i$ play in enabling synchronization among oscillators that are not directly connected.

To proceed, define $x_i=\theta_{0}-\theta_{i}$ for $i=1,2\dots,n$. Then the time derivative of $x_i$ is given by
\begin{align}
\dot x_i=\omega_{0}-\omega-\sum_{j=1}^{n}K_i \sin(\theta_0 -\theta_j)
- A_i \sin(\theta_0-\theta_i). \label{model:inc:EqNatFr} 
\end{align}
Let $x=[x_1,x_2,\dots,x_n]^\top\in \mathbb S^n, \bm{\omega}=(\omega_0-\omega)\mathbf{1}_n$ with $\mathbf{1}_n=[1,\dots,1]^\top$, and $\mat{sin}x=[\sin x_1,\dots,\sin x_n]^\top$, then the dynamics \eqref{model:inc:EqNatFr}  can be represented in a compact form as follows
\begin{align}
\dot x={\bm{\omega}} - T\mat{sin} x:=f(x), \label{model:inc:EqNatFr:comp}
\end{align}
where $f(x)=[f_1(x),\dots,f_n(x)]^\top$ and
\begin{align}
T= \left[ {\begin{array}{*{20}{c}}
	{A_1 + K_1}&K_2&\cdots&K_n\\
	K_1&{A_2 + K_2}& \cdots &K_n\\
	\vdots & \vdots & \ddots & \vdots \\
	K_1&K_2& \cdots &{A_n + K_n}
	\end{array}} \right]. \label{expr:T1}
\end{align}
Let $x^*$ be an equilibrium of \eqref{model:inc:EqNatFr:comp}, if it exists, i.e., $f(x^*)=0$. From the definition of $x$, we observe that $x^*$ corresponds to a (partial) remote synchronization manifold if there exists a set $\CR=\N$ ($\CR\subset \N$)  such that for any $i,j\in \CR$, $x^*_i=x^*_j$. In what follows, we show under what conditions on the coupling strengths the equilibrium $x^*$ exists and all (a part of) of its elements are identical, which gives rise to the corresponding (partial) remote synchronization of the model \eqref{Md:overall}. Towards this end, let us first make an assumption.

\begin{assumption}\label{assum:coup}
	We assume that the coupling strengths satisfy the following inequality 
	\begin{align}
	&A_i \ge (n-1) K_i,&\forall i\in\N, \label{ineq:ToDiaMatr}
	\end{align} 
	and the corresponding matrix $T$, given by \eqref{expr:T1}, satisfies
	\begin{align}
	\|T^{-1} \bm{\omega}\|_\infty<1. \label{Condi:EquNa}
	\end{align}
\end{assumption}

Assumption \ref{assum:coup} suggests that the strengths of outgoing couplings are much greater than that of incoming ones. By observing that for any $i$ it holds that
\begin{align*}
A_i+K_i-(n-1)K_i \ge& \\
 (n-1) &K_i+K_i-(n-1)K_i= K_i>0,
\end{align*}
we know that the matrix $T$ is \textit{column diagonally dominant}. By Gershgorin circle theorem \cite[Sec. 6.2]{horn1990matrix}, one knows all the eigenvalues of $T^\top$ have positive real parts, which also means that all the eigenvalues of $T$ lie on the right half plane. Thus $T$ is invertible.  We are now at a position to present our main result in this section.

\begin{theorem}\label{Theo:coupling}
	under Assumption \ref{assum:coup}, there exists a unique locally asymptotically stable equilibrium $x^*$ satisfying $|x^*_i|\in[0,\pi/2)$ for all $i\in\N$ for the dynamics \eqref{model:inc:EqNatFr}, which is
	\begin{align}
	x^*= \mat{arcsin}(T^{-1}{\bm{\omega}}). \label{equil:natEq}
	\end{align}
	 {In addition, if there is a pair of distinct $i,j\in\N$ such that  $A_i=A_j$, then $x^*_i=x^*_j$.} This $x^*$ corresponds to a partial remote synchronization manifold, denoted by $\Rot(\theta^*)$, for the dynamics \eqref{Md:overall}, which implies oscillators $i$ and $j$ are remotely synchronized.
	\end{theorem}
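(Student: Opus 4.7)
The plan splits naturally into three analytic steps mirroring the claims.

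First, I would derive the explicit formula for $x^*$ by setting $f(x^*)=0$ in \eqref{model:inc:EqNatFr:comp}. This produces $T \sin x^* = \bm{\omega}$; since $T$ is invertible (as noted just before the theorem), we obtain $\sin x_i^* = (T^{-1}\bm{\omega})_i$ for each $i$. The norm condition \eqref{Condi:EquNa} guarantees that each right-hand side lies strictly inside $(-1,1)$, so inverting with the principal branch of arcsine produces \eqref{equil:natEq} with every $|x_i^*|<\pi/2$. Uniqueness within this strip is immediate because $\sin$ is a bijection from $(-\pi/2,\pi/2)$ onto $(-1,1)$.

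Second, to establish local asymptotic stability I would linearize at $x^*$. A direct computation gives $Df(x^*) = -TD$, where $D=\diag(\cos x_1^*,\ldots,\cos x_n^*)$. Because $|x_i^*|<\pi/2$, $D$ has strictly positive entries. I would then observe that right-multiplying $T$ by a positive diagonal matrix merely rescales its columns, so the column diagonal dominance of $T$ obtained from Assumption~\ref{assum:coup} is preserved by $TD$. Applying the Gershgorin circle theorem to $(TD)^\top$ then places every eigenvalue of $TD$ strictly in the open right half-plane, so $-TD$ is Hurwitz and local asymptotic stability of $x^*$ follows by standard linearization.

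Third, for the symmetry claim I would exploit the row structure of $T$ in \eqref{expr:T1}. Let $y:=T^{-1}\bm{\omega}$. The $i$-th row of $Ty=\bm{\omega}$ reads $\sum_{k=1}^n K_k y_k + A_i y_i = \omega_0-\omega$, and the $j$-th row is the analogous expression with $A_j y_j$ in place of $A_i y_i$. Subtracting kills the common summation and yields $A_i y_i = A_j y_j$; when $A_i=A_j>0$ this forces $y_i=y_j$, hence $\sin x_i^* = \sin x_j^*$, and since both arguments lie in $(-\pi/2,\pi/2)$ we conclude $x_i^* = x_j^*$. Translating back via $x_i^*=\theta_0^*-\theta_i^*$ gives $\theta_i^*=\theta_j^*$, which is precisely remote synchronization of oscillators $i$ and $j$, so $\Rot(\theta^*)$ is a partial remote synchronization manifold.

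The main obstacle I anticipate is Step~2: ensuring that multiplying $T$ on the right by the $x^*$-dependent positive diagonal $D$ does not destroy the Gershgorin-based spectral argument used earlier to show $T$ is invertible. The clean observation that column diagonal dominance is invariant under positive column scaling dispenses with this worry; without it one would be tempted to reach for heavier tools such as $M$-matrix properties or a diagonal Lyapunov function.
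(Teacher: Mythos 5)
Your proposal is correct and follows essentially the same route as the paper's (sketched) proof: inverting $T$ and using $\|T^{-1}\bm{\omega}\|_\infty<1$ to get the unique equilibrium via the principal arcsine, linearizing to the Jacobian $-T\diag(\cos x_1^*,\dots,\cos x_n^*)$ and combining column diagonal dominance with Gershgorin for stability, and subtracting rows of $T\sin x^*=\bm{\omega}$ for the symmetry claim. You in fact supply the details the paper omits (the invariance of column diagonal dominance under positive column scaling and the explicit row-subtraction step), so no gap remains.
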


	\begin{proof} {Due to the page limit, we only provide a sketch
		of proof. From the hypothesis \eqref{Condi:EquNa}, one knows that there exists a unique solution in $[0,\pi/2)$, i.e., $x^*= \mat{arcsin} (T^{-1} {\bm{\omega}})$.
	
	We show the stability of this equilibrium by linearization. Towards this end, we evaluate the Jacobian matrix
	\begin{align}
	J\left( x^* \right) = { - {{\left. {\frac{{\partial f}}{{\partial x}}} \right|}_{x = {x^*}}}} 
	=-T\diag\left(\cos x^*_1,\dots,\cos x^*_n\right). \label{Joca:expr}
	\end{align}
	By using the facts that $T$ is column diagonally dominant and $\cos x^*_i>0$ for all $i\in\N$, it is not hard to see that all the eigenvalues of $J(x^*)$ have negative real parts, which means the equilibrium $x^*_i$ is locally asymptotically stable.
	
	Finally, we prove $x^*_i=x^*_j$ if $A_i=A_j$ by showing that $\sin x^*_i=\sin x^*_j$ since $|x^*_i|<\pi/2$ for all $i$.}
	\end{proof}

	\begin{remark}	
	{Theorem \ref{Theo:coupling} shows that oscillators without direct connections are able to get phase synchronized just because the directed connections from the central mediator towards them are identical. Authors in \cite{nicosia2013remote} show that symmetries in undirected networks are important for remote synchronization. In contrast, we take directions of the couplings into account, and show that only the outgoing couplings, $A_i$, matter. This idea is analogous to equitable partition in graphs investigated in \cite{schaub2016graph} and \cite{tiberi2017synchronization} if the central mediator is regarded as a cluster and the peripheral ones as another. Different from these two works, we studied the stability of remote synchronization.
	}
	\end{remark}
	
	What is worth mentioning, by carefully manipulating the symmetry of the couplings originated from the central node $0$, not only synchronization among distant oscillators can be facilitated, but also unnecessary synchronization can be easily prevented. Moreover, interesting patterns of remote synchronization, such as cluster synchronization, can occur. The following corollary provides some sufficient conditions for the existence and stability of remote and cluster synchronization manifold, which follows from Theorem \ref{Theo:coupling} straightforwardly.
	
	\begin{corollary}\label{Coro:1}
		Under Assumption \ref{assum:coup}, there is a locally asymptotically stable remote synchronization manifold for the dynamics \eqref{Md:overall}, i.e., in which the solution $\theta(t)$ is completely remotely synchronized, if $A_i=A_j$ for every pair $i,j\in\N$; there is a locally asymptotically stable partial remote synchronization manifold for the dynamics \eqref{Md:overall}, in which the solution $\theta(t)$ is  remotely clustered with respect to $\mathcal C$, if there is a partition of $\N$, denote by $\mathcal C=\{\mathcal C_1,\dots,\mathcal C_m\}$, satisfying $|\mathcal C_p|\ge 2$ and $ \cup_{p=1}^{m}\mathcal C_p=\N$ such that: for any given $\mathcal C_p$ and every pair $i,j\in \mathcal C_p$ it holds that $A_i=A_j$; on the other hand, for any given $i\in \mathcal C_p, j \in \mathcal C_q$ where $p\neq q$, $A_i\neq A_j$.
	\end{corollary}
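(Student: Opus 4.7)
The plan is to derive Corollary \ref{Coro:1} as a direct consequence of Theorem \ref{Theo:coupling} applied pairwise, augmented by one extra observation needed to certify that distinct clusters actually sit at distinct phases. First I would invoke Theorem \ref{Theo:coupling} under Assumption \ref{assum:coup} to obtain the unique locally asymptotically stable equilibrium $x^*=\mat{arcsin}(T^{-1}\bm{\omega})$ with $|x^*_i|<\pi/2$ for every $i\in\N$, and then read off the conclusion about the synchronization manifold $\Rot(\theta^*)$ from the relation $x_i=\theta_0-\theta_i$.

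For the first assertion of the corollary (complete remote synchronization), the hypothesis $A_i=A_j$ for every pair $i,j\in\N$ lets me apply the last clause of Theorem \ref{Theo:coupling} to each such pair, yielding $x^*_i=x^*_j$ for all $i,j\in\N$. Translating back, $\theta_0^*-\theta_i^*=\theta_0^*-\theta_j^*$, so $\theta_i^*=\theta_j^*$ for all peripheral indices, which is exactly the complete remote synchronization manifold. Local asymptotic stability is inherited verbatim from Theorem \ref{Theo:coupling}.

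For the second assertion (cluster remote synchronization), I would split the reasoning into an intra-cluster step and an inter-cluster step. Intra-cluster: for any pair $i,j\in\mathcal C_p$ the hypothesis gives $A_i=A_j$, so Theorem \ref{Theo:coupling} again yields $x^*_i=x^*_j$, establishing $\theta_i^*=\theta_j^*$ within every cluster. Inter-cluster: here Theorem \ref{Theo:coupling} does not suffice on its own, and this is the step I expect to be the main (mild) obstacle. The cure is to evaluate $f_i(x^*)=0$ explicitly, which rearranges to
\begin{align*}
A_i\sin x^*_i \;=\; (\omega_0-\omega)-\sum_{j=1}^n K_j\sin x^*_j \;=:\; C,
\end{align*}
so $\sin x^*_i=C/A_i$ for every $i\in\N$ with a common constant $C$. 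In the nondegenerate regime $\omega_0\neq\omega$ one has $C\neq 0$ (since $C=0$ would force every $x^*_i=0$ and hence $\omega_0=\omega$), so $A_i\neq A_j$ immediately gives $\sin x^*_i\neq\sin x^*_j$; combined with $|x^*_i|,|x^*_j|<\pi/2$ this yields $x^*_i\neq x^*_j$, and therefore $\theta_i^*\neq\theta_j^*$ whenever $i,j$ live in different clusters. This verifies the partition structure required by the definition of remote cluster synchronization.

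Finally, local asymptotic stability of the partial remote synchronization manifold transfers directly from the stability of $x^*$ provided by Theorem \ref{Theo:coupling}, since $\Rot(\theta^*)$ is nothing but the rotation orbit of the phase configuration read off from $x^*$. Thus both assertions reduce to Theorem \ref{Theo:coupling} plus the one-line equilibrium identity above, confirming the claim that the corollary is a straightforward consequence of the main theorem.
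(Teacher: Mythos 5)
Your proposal is correct and follows the same route the paper intends: the corollary is presented there as an immediate consequence of Theorem \ref{Theo:coupling} with no separate proof, and your pairwise application of the theorem's last clause is exactly that argument. Your additional observation that $A_i\sin x^*_i$ equals a common constant $C$, so that $A_i\neq A_j$ forces $x^*_i\neq x^*_j$ whenever $\omega_0\neq\omega$, usefully supplies the inter-cluster distinctness (and flags the degenerate case $\omega_0=\omega$) that the paper leaves implicit.
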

		
	In next section, we consider the case where there is a phase shift (or phase lag) term $\alpha$. The model with the presence of a phase shift is known as the Kuramoto-Sakaguchi model \cite{sakaguchi1986soluble}. 

\section{Remote Synchronization with Phase Shift}\label{section:4}
In this section, we consider that there is a phase shift $\alpha \in (0,\pi/2)$. By introducing a phase shift term, we allow the oscillators to get synchronized at a frequency that differs from the average of their natural frequencies \cite{montbrio2004synchronization}. This phenomenon have always been observed in many biological systems such as heart cells and mammalian intestine  \cite{winfree2001geometry}. Moreover, in neural networks the phase shift $\alpha$ is often used to model delays concerning synaptic connections \cite{hoppensteadt2012weakly}.
To study the remote synchronization of our interest, we simplified the problem by assuming that $A_i=A$ and $K_i=A/n$ for all $i$. Note that this simplification ensures that the direction of the network is preserved and the condition \eqref{ineq:ToDiaMatr} is satisfied, which guarantees the property that the outgoing couplings are much stronger than the incoming ones. 
Consequently, the dynamics \eqref{Md:overall} become
\begin{align}
&\dot \theta_0 = \omega_0+ \frac{A}{n} \sum_{i=1}^{n}\sin (\theta_i-\theta_0-\alpha); \nonumber\\
&\dot \theta_i=\omega+A\sin(\theta_0-\theta_i-\alpha),i=1,2,\dots,n. \label{Md:phaShi}
\end{align}
 Conditions on the coupling strength $A$ are subsequently obtained to ensure that the dynamics \eqref{Md:phaShi} admit a locally asymptotically stable remote synchronization manifold. 
We investigate how these conditions depend on the phase shift $\alpha$. As frequency synchronization is the footstone for the analysis that follows, let us provide the necessary condition for the existence of a frequency synchronized solution to \eqref{Md:phaShi} and see how it depends on the phase shift $\alpha$.
\begin{proposition} \label{necessary}
	There is a frequency synchronized solution to the dynamics \eqref{Md:phaShi} only if \[A\ge \frac{1}{2\cos \alpha} |\omega_0-\omega|.\]
\end{proposition}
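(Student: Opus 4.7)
The plan is to assume a frequency synchronized solution exists and derive an algebraic relation between $\omega_0-\omega$ and $A$, then bound the trigonometric terms that appear.

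First, I would exploit the definition: frequency synchronization requires $\dot\theta_0=\dot\theta_i=\omega_{\rm syn}$ for every $i$. Substituting into \eqref{Md:phaShi}, the peripheral equations give $\sin(\theta_0-\theta_i-\alpha)=(\omega_{\rm syn}-\omega)/A$ for all $i\in\N$. Setting $y_i:=\theta_0-\theta_i-\alpha$ and $s:=(\omega_{\rm syn}-\omega)/A$, this says $\sin y_i=s$ uniformly in $i$, so in particular $|s|\le 1$ and each $\cos y_i=\pm\sqrt{1-s^2}$.

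Next I would rewrite the central equation. Since $\theta_i-\theta_0-\alpha=-y_i-2\alpha$, we have $\sin(\theta_i-\theta_0-\alpha)=-\sin(y_i+2\alpha)$, and expanding with the angle-addition formula,
\begin{equation*}
\omega_{\rm syn}-\omega_0=-\frac{A}{n}\sum_{i=1}^{n}\bigl(s\cos 2\alpha+\cos y_i\sin 2\alpha\bigr)=-A\bigl(s\cos 2\alpha+c\sin 2\alpha\bigr),
\end{equation*}
where $c:=\frac{1}{n}\sum_{i=1}^{n}\cos y_i$. Subtracting $\omega_{\rm syn}-\omega=As$ from this relation eliminates $\omega_{\rm syn}$ and, using $1+\cos 2\alpha=2\cos^2\alpha$ and $\sin 2\alpha=2\sin\alpha\cos\alpha$, yields the compact identity
\begin{equation*}
\omega_0-\omega=2A\cos\alpha\bigl(s\cos\alpha+c\sin\alpha\bigr).
\end{equation*}

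Finally I would bound the parenthetical term. Because $|\cos y_i|\le\sqrt{1-s^2}$, we have $|c|\le\sqrt{1-s^2}$, and the Cauchy--Schwarz inequality gives $|s\cos\alpha+c\sin\alpha|\le\sqrt{s^2+c^2}\sqrt{\cos^2\alpha+\sin^2\alpha}\le\sqrt{s^2+(1-s^2)}=1$. Substituting this bound into the identity above produces $|\omega_0-\omega|\le 2A\cos\alpha$, which rearranges to the claimed inequality. The only delicate step is the bound on $c$: it relies on the fact that all $y_i$ share the same sine value and hence $|c|\le\sqrt{1-s^2}$, which is what makes $s^2+c^2\le 1$ hold even though $s$ and $c$ are not \emph{a priori} independent. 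Beyond this observation, the argument is routine trigonometric manipulation.
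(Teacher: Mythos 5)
Your argument is correct and follows the same route the paper intends: impose $\dot\theta_0=\dot\theta_i=\omega_{\rm syn}$, eliminate $\omega_{\rm syn}$, and bound the resulting trigonometric expression, arriving at $|\omega_0-\omega|\le 2A\cos\alpha$. The only remark is that your Cauchy--Schwarz step (with the observation $s^2+c^2\le 1$) can be shortcut by averaging $\dot\theta_0-\dot\theta_i=0$ over $i$ and using $\sin(x_j+\alpha)+\sin(x_j-\alpha)=2\sin x_j\cos\alpha$, which yields $\omega_0-\omega=\frac{2A\cos\alpha}{n}\sum_j\sin x_j$ and hence the bound immediately.
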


{Letting $\dot\theta_0-\dot\theta_i=0$ for all $i$, the proof of Proposition \ref{necessary} follows immediately.} We observe that when $\alpha=0$, this necessary condition reduces to $A\ge  |\omega_0-\omega|/2$. Obviously, the existence of the phase shift raises the requirement for the coupling strength $A$. Next, we show the sufficient conditions on $A$ such that there is a locally asymptotically stable remote synchronization manifold for \eqref{Md:phaShi}. Towards this end, let $y_i=(\theta_{0}-\theta_{i})/2, y_i\in\mathbb{S}^1$ for $i=1,2\dots,n$. The time derivative of $y_i$ is 
\begin{align}
\dot y_i=&\frac{1}{2}(\omega_0-\omega)+\frac{A}{2n} \sum_{j=1}^{n}\sin (\theta_j-\theta_0-\alpha)	\nonumber\\
&\;\;\;\;\;\;\;\;\;\;\;\;\;\;\;\;\;\;\;\;\;\;\;\;\;\;\;\;\;\;\;\;\;\;\;\;\;-\frac{1}{2}A\sin(\theta_0-\theta_i-\alpha)\nonumber\\
=&\frac{1}{2}(\omega_0-\omega)-\frac{A}{2n} \sum_{j=1}^{n}\sin (2y_j+\alpha)\nonumber\\
&\;\;\;\;\;-\frac{1}{2}A\sin(2y_i-\alpha)
:=g_i(y),i=1,2,\dots,n. \label{der:yi:ori}
\end{align}
where $y=[y_1,\dots,y_n]^\top$ and $g(y)=[g_1(y),\dots,g_n(y)]^\top$. 
 Let us provide the main result in this section. 
\begin{theorem}\label{Theo:withAL}
	There is a unique locally asymptotically stable  equilibrium $y^*$ for the dynamics \eqref{der:yi:ori} satisfying $|y^*_i| < \pi/4$ for all $i$, which is 
	\begin{align}
	y^*=\frac{1}{2}\arcsin \left(\frac{\omega_0-\omega}{2A\cos \alpha}\right)\mathbf{1}_n, \label{equil:ComRemo}
	\end{align}
	if the following conditions are satisfied, respectively:
	\begin{enumerate}
		\item[i)] when $\omega_0>\omega$, the coupling strength $A$ satisfies  
			\begin{align}
		A>\frac{\omega_0-\omega}{2\cos\alpha}; \label{Posi:condi}
		\end{align} 
		\item[ii)] when $\omega_0<\omega$, the coupling strength $A$ satisfies  
		\begin{align}
		A>\frac{\omega-\omega_0}{2\cos^2\alpha}.\label{Nega:condi}
		\end{align} 
	\end{enumerate}
This locally asymptotically stable  equilibrium $y^*$ for the dynamics \eqref{der:yi:ori} corresponds to the locally asymptotically stable remote synchronization manifold for \eqref{Md:phaShi}.
\end{theorem}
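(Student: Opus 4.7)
The plan is to follow the same template as the proof of Theorem \ref{Theo:coupling}: guess a uniform equilibrium $y^*\mathbf{1}_n$ whose existence is forced by the conditions on $A$, and then certify local asymptotic stability by showing that every eigenvalue of the Jacobian $\partial g/\partial y$ at this equilibrium has negative real part. First I would substitute the ansatz $y=y^*\mathbf{1}_n$ into \eqref{der:yi:ori}. The sum over $j$ collapses, and using the identity $\sin(2y^*+\alpha)+\sin(2y^*-\alpha)=2\sin(2y^*)\cos\alpha$, the equilibrium equation reduces to the single scalar condition $\sin(2y^*)=(\omega_0-\omega)/(2A\cos\alpha)$. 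Under either \eqref{Posi:condi} or \eqref{Nega:condi} the right-hand side lies strictly inside $(-1,1)$ (note that $\cos^2\alpha<\cos\alpha$ on $(0,\pi/2)$), so it is solved uniquely inside $|y^*|<\pi/4$ by the formula \eqref{equil:ComRemo}.

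Next I would compute the Jacobian at this equilibrium. Differentiating $g_i$ with respect to $y_j$ produces the diagonal term $-A\cos(2y^*-\alpha)\delta_{ij}$ from the $\sin(2y_i-\alpha)$ contribution and an all-ones term $-(A/n)\cos(2y^*+\alpha)$ from the coupling sum, so
\[
J(y^*\mathbf{1}_n) = -A\cos(2y^*-\alpha)\,I_n - \frac{A}{n}\cos(2y^*+\alpha)\,\mathbf{1}_n\mathbf{1}_n^\top.
\]
This matrix is explicitly diagonalizable because the rank-one perturbation acts nontrivially only on $\mathbf{1}_n$. Its spectrum consists of the simple eigenvalue $\lambda_1=-A[\cos(2y^*-\alpha)+\cos(2y^*+\alpha)]=-2A\cos(2y^*)\cos\alpha$ (with eigenvector $\mathbf{1}_n$) and $\lambda_2=-A\cos(2y^*-\alpha)$ with multiplicity $n-1$.

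Stability then reduces to checking $\lambda_1,\lambda_2<0$. For $\lambda_1$ this is automatic: $|y^*|<\pi/4$ gives $\cos(2y^*)>0$ and $\alpha\in[0,\pi/2)$ gives $\cos\alpha>0$. For $\lambda_2$ I need $|2y^*-\alpha|<\pi/2$, whose upper bound is free from $2y^*<\pi/2$ while the lower bound splits into the two cases. In case (i), $\omega_0>\omega$ forces $y^*>0$, so $2y^*-\alpha>-\alpha>-\pi/2$ and nothing beyond \eqref{Posi:condi} is needed. In case (ii), $y^*<0$; rewriting $2y^*>\alpha-\pi/2$ as $\sin(2y^*)>-\cos\alpha$ and substituting the explicit formula for $\sin(2y^*)$ produces exactly \eqref{Nega:condi}.

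The anticipated main obstacle is the \emph{uniqueness} half of the claim: a priori, other equilibria of $g$ could live in $\{|y_i|<\pi/4\}$ with unequal components, since subtracting the $i$th and $j$th equilibrium equations gives only $\sin(2y_i-\alpha)=\sin(2y_j-\alpha)$, which admits the extra branch $y_i+y_j=\alpha-\pi/2$. I would argue that under \eqref{Posi:condi} or \eqref{Nega:condi} this extra branch either exits the admissible box or yields an equilibrium whose Jacobian picks up an eigenvalue of the form $-A\cos(2y^\dagger-\alpha)$ with $2y^\dagger-\alpha$ outside $(-\pi/2,\pi/2)$, hence positive, ruling out a second locally asymptotically stable equilibrium. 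Finally, translating back through $y_i=(\theta_0-\theta_i)/2$, the uniform equilibrium $y^*\mathbf{1}_n$ gives $\theta_1(t)=\cdots=\theta_n(t)$, so it corresponds to the complete remote synchronization manifold for \eqref{Md:phaShi}, and local asymptotic stability of $y^*$ transfers to local asymptotic stability of the manifold in the $\theta$-coordinates.
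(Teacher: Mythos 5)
Your proposal is correct and follows the paper's overall template---exhibit the uniform equilibrium explicitly, then certify local asymptotic stability through the Jacobian---but the way you certify that the Jacobian is Hurwitz is genuinely different and, in fact, more robust. The paper's sketch claims that $J(y^*)$ has negative diagonal entries and is diagonally dominant; you instead exploit the structure $J(y^*)=-A\cos(2y^*-\alpha)I_n-\tfrac{A}{n}\cos(2y^*+\alpha)\mathbf{1}_n\mathbf{1}_n^\top$ to read off the exact spectrum $\{-2A\cos(2y^*)\cos\alpha,\,-A\cos(2y^*-\alpha)\}$. This difference matters: in case ii), with $A$ close to the threshold \eqref{Nega:condi}, one has $\cos(2y^*-\alpha)\to 0^+$ while $\cos(2y^*+\alpha)$ stays bounded away from zero, so for $n\ge 3$ the diagonal dominance inequality $\cos(2y^*-\alpha)>\tfrac{n-2}{n}\cos(2y^*+\alpha)$ can fail even though both eigenvalues remain strictly negative; your spectral computation closes that gap, and your reduction of $\lambda_2<0$ to $2y^*-\alpha>-\pi/2$ is the cleanest way to see where the asymmetry between $\cos\alpha$ in \eqref{Posi:condi} and $\cos^2\alpha$ in \eqref{Nega:condi} originates. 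You also flag and partially treat uniqueness, which the paper's proof omits entirely; the one loose step there is your claim that an equilibrium on the spurious branch $y_i+y_j=\alpha-\pi/2$ has Jacobian eigenvalue $-A\cos(2y^\dagger-\alpha)$: at a non-uniform equilibrium the Jacobian is $-A\diag\left(\cos(2y_l-\alpha)\right)-\tfrac{A}{n}\mathbf{1}_nc^\top$ with $c$ non-constant, so that quantity is an eigenvalue only if you exhibit an eigenvector orthogonal to $c$ supported on indices sharing the value $y^\dagger$ (or use a trace/interlacing argument). Since the paper does not address uniqueness at all, this is a refinement to complete rather than a defect relative to the published argument.
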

	\begin{proof}
		{The proof is similar to that of Theorem \ref{Theo:coupling}. The idea is to evaluate the Jacobian matrix $J(y)$ at the equilibrium $y=y^*$. We prove that all the eigenvalues of $J(y)$ have negative real parts by showing that $J(y)$ has negative diagonal elements and is diagonally dominant if hypotheses \eqref{Posi:condi} and \eqref{Nega:condi} are satisfied for the cases i) and ii) respectively. This implies the stability of the equilibrium $y^*$.}	
	\end{proof}
	\begin{remark}
	Theorem \ref{Theo:withAL} provides some sufficient conditions for the existence and local stability of the equilibrium of dynamics \eqref{der:yi:ori}, or equivalently, for the existence and local stability of remote synchronization manifold of \eqref{Md:phaShi}. With the presence of the phase shift $\alpha$, the requirement of coupling strengths is increased. In fact, the larger the phase shift is, the stronger the coupling is required, which can be observed from \eqref{Posi:condi} and \eqref{Nega:condi}. Interestingly, comparing \eqref{Nega:condi} with \eqref{Posi:condi} we observe that the phase shift has a different impact on the coupling strength in the two cases when $\omega_0>\omega$ and $\omega_0<\omega$. The latter case is more vulnerable to phase shift.
	\end{remark}
	
	\section{Numerical  Examples} \label{simulation}
	To validate the results we obtained in Section \ref{section:3} and Section \ref{section:4}, we perform some numerical studies in this section. We consider $7$ oscillators coupled by a directed star network illustrated in Fig. \ref{Fig:star}. To measure the levels of synchronization we introduce the two useful functions as follows,
	\begin{align*}
	h_1(\theta(t))=\max_{i,j\in\N}|\theta_i(t)-\theta_j(t)|,\\
	h_2(\theta(t))=\max_{i\in\N}|\theta_0(t)-\theta_i(t)|,
	\end{align*} 
	If $h_2=0$, the phase difference between any peripheral oscillator and the central one is zero, which implies complete synchronization in the whole network. In particular, if $h_1=0,h_2\neq 0$, all the phases of peripheral oscillators are identical remaining central one different, which yields remote synchronization.
	
		\begin{figure}[!t]
		\centering
		{
			\begin{tikzpicture} [->,>=stealth',shorten >=1pt,auto,node distance=1.2cm,
			main node/.style={circle,fill=blue!5,draw,minimum size=0.4cm,inner sep=0pt]},
			red node/.style={circle,fill=red!30,draw,minimum size=0.4cm,inner sep=0pt]}]
			
			\node[red node]          (0) at (0,0)                      {$0$};

			\node[main node]          (1) at (0,1.5)         {$1$};		
			\node[main node]          (2) at (1.35,0.8)        {$2$};	
			\node[main node]          (3) at (1.35,-0.8)          {$3$};
			\node[main node]          (4) at (0,-1.5)          {$4$};	
			\node[main node]          (5) at (-1.35,-0.8)         {$5$};
			\node[main node]          (6) at (-1.35,0.8)         {$6$};
			

			\tikzset{direct/.style={->,line width=1pt}}
			\tikzset{red_direct/.style={->,line width=1pt,red}}
			\path (0)     edge[direct]     node   {} (1)
			(1)     edge[direct]     node   {} (0)		 
			(0)     edge[direct]     node   {} (2)  
			(2)     edge[direct]     node   {} (0)  
			(0)     edge[direct]     node   {a} (3)  
			(3)     edge[direct]     node   {a} (0) 
			(0)     edge[direct]     node   {} (5)
			(5)     edge[direct]     node   {} (0)
			(0)     edge[direct]     node   {} (6)
			(6)     edge[direct]     node   {} (0)
			(0)     edge[direct]     node   {} (4)
			(4)     edge[direct]     node   {} (0);		
			\end{tikzpicture}}		
		\caption{ The star network considered: central node $0$ and peripheral ones $\{1,2,3,4,5,6\}$.}
		\label{Fig:star}	\end{figure}
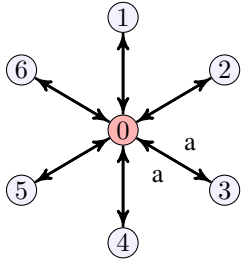

\begin{figure}[!t]		
	\centering{
		\includegraphics[width=0.25\textwidth]{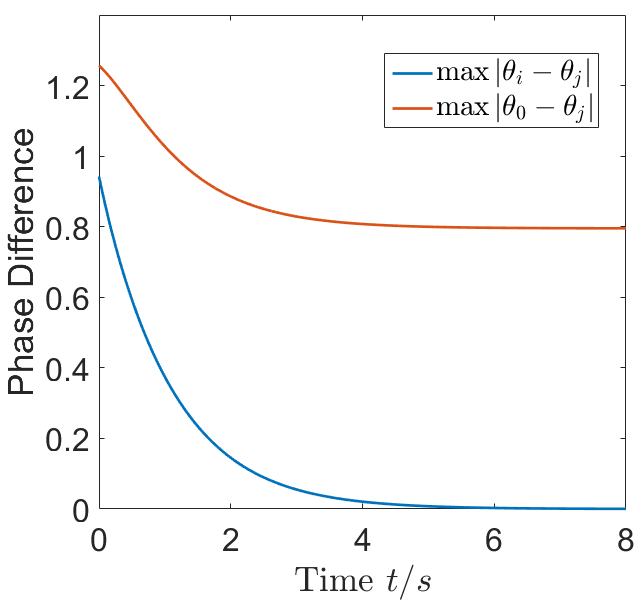}}
	\label{Th1_Comple}
	\caption{Trajectories of the maximum absolute values of the phase differences when $\alpha=0$: blue represents $h_1=\max_{i,j\in\N}|\theta_i-\theta_j|$ and red represents $h_2=\max_{i\in\N}|\theta_0-\theta_i|$.}
	\label{Trj:CompRemot}
\end{figure}
	
	We first testify the results obtain in Theorem \ref{Theo:coupling}.  To distinguish the frequencies, let the frequency of each peripheral oscillator be $\omega=0.8\pi \; rad/s$, and the natural frequency of the central one be $\omega_0=1.5\pi \; rad/s$. In order to make complete remote synchronization occur, we let $A_i=1.4$ for all $i=1,2,\dots,6$, and let $K_1=0.15, K_2=0.12,K_3=0.2, K_4=0.18, K_5=0.2,  K_6=0.14$. Then the matrix $T$ becomes		
	\begin{align*}
	T=	\left[ {\begin{array}{*{20}{c}}
		{1.55}&{0.12}&{0.2}&{0.18}&{0.2}&{0.14}\\
		{0.15}&{1.52}&{0.2}&{0.18}&{0.2}&{0.14}\\
		{0.15}&{0.12}&{1.6}&{0.18}&{0.2}&{0.14}\\
		{0.15}&{0.12}&{0.2}&{1.58}&{0.2}&{0.14}\\
		{0.15}&{0.12}&{0.2}&{0.18}&{1.6}&{0.14}\\
		{0.15}&{0.12}&{0.2}&{0.18}&{0.2}&{1.54}
		\end{array}} \right].
	\end{align*}	
	It can be verified that $T$ is diagonal dominated and $|T^{-1}\bm{\omega}|=0.9201<1$, i.e. conditions in Assumption \ref{assum:coup} are satisfied. Let the initial phases be $\theta(0)=[1.3\pi,1.2\pi,1.15\pi,0.9\pi,1.2\pi,1.0\pi,1.11\pi]^\top$, and then the trajectories of $h_1(\theta(t))$ and $h_2(\theta(t))$ are presented in Fig. \ref{Trj:CompRemot}. It can be observed that $h_1(\theta(t))$ converges to zero, while $h_2(\theta(t))$ converges to a constant, suggesting that the peripheral oscillators which are not directly connected achieve phase synchronization, but the ones that have direct connections (the central one with each peripheral one) do not.  Next, we show that cluster synchronization is formed if the conditions in Corollary \ref{Coro:1} are satisfied. Let the outgoing coupling strengths be $A_1=A_4=2.1, A_2=A_5=2.8, A_3=A_6=4.2$, and let the incoming coupling strength be the same as considered above. One can also check Assumption \ref{assum:coup} is satisfied since $|T^{-1}\bm{\omega}|=0.7743<1$. Let $\theta(0)=[1.3 \pi, 0.2\pi, 0.6\pi, 1\pi, 1.4 \pi, 1.8\pi, 2\pi]^\top$, and the phases of the oscillators are plotted on the  unit circle $\mathbb{S}^1$ at a sequence of time instants (see Fig. \ref{snapshot}). One can observe that the peripheral oscillators with the same outgoing strength $A_i$ get phase synchronized, forming three clusters (in each of which phases are different from the central one's). This suggests that the symmetry of the outgoing couplings of the peripheral oscillators plays an essential role in facilitating remote synchronization. 
	
		\begin{figure}[t!] 
		\centering
		\subfigure[$t=0$]{
			\includegraphics[width=0.12\textwidth]{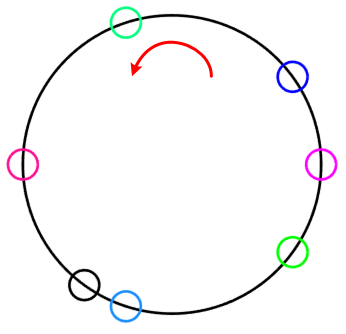}}
		\subfigure[$t=0.5$]{
			\includegraphics[width=0.12\textwidth]{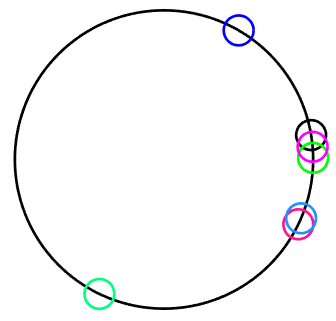}}
		\subfigure[$t=1$]{
			\includegraphics[width=0.12\textwidth]{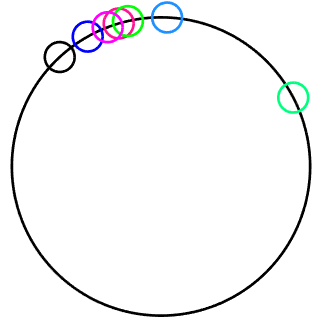}}
		\subfigure[$t=2$]{
			\includegraphics[width=0.12\textwidth]{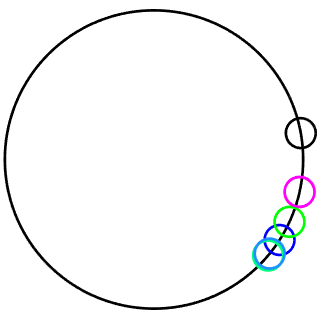}}
		\subfigure[$t=4$]{
			\includegraphics[width=0.12\textwidth]{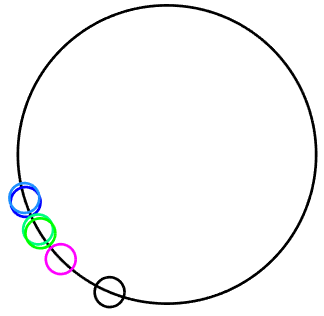}}
		\subfigure[$t=8$]{
			\includegraphics[width=0.12\textwidth]{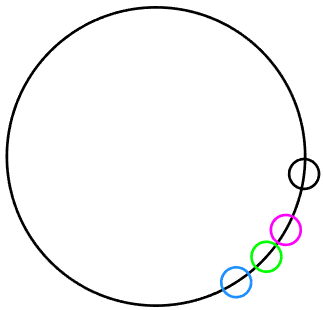}}
		
		\caption{The phases on $\mathbb{S}^1$ at six time instants when $\alpha=0$: black represents the central oscillator $0$; blue represents oscillators $1$ and $4$; green represents $2$ and $5$; red represents $3$ and $6$.}
		\label{snapshot}	
	\end{figure}

\begin{figure}[!t]		
	\centering
	\subfigure[$\omega_0<\omega,A=1.6\pi$]{		
		\includegraphics[width=0.20\textwidth]{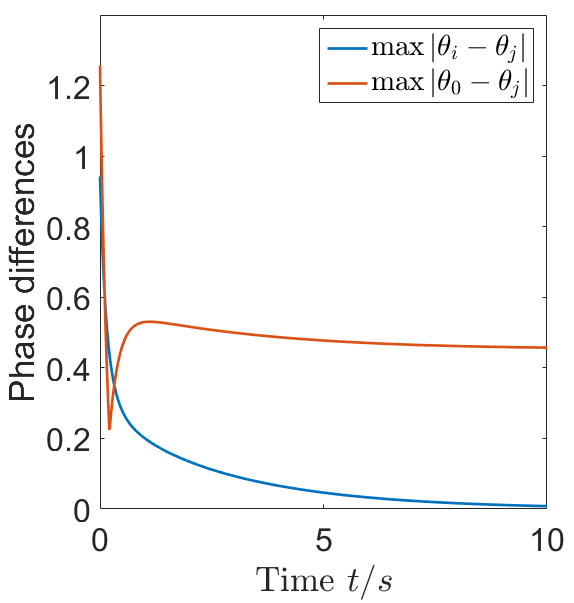}
		\label{phase_D-mi_syn}}	
	\subfigure[$\omega_0<\omega,A=\pi$]{		
		\includegraphics[width=0.20\textwidth]{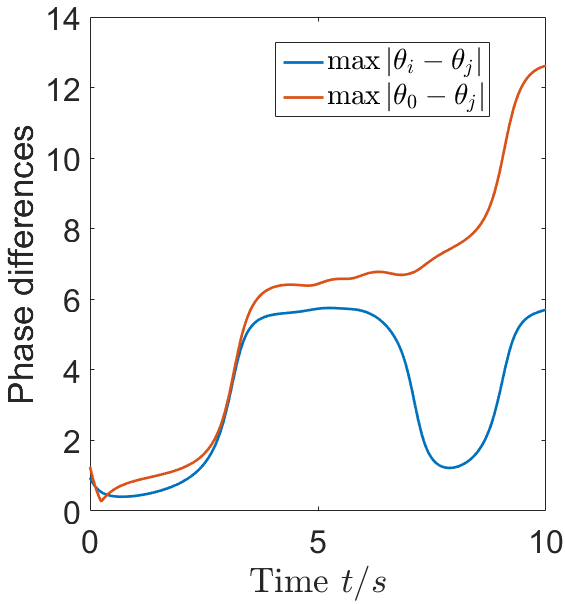}
		\label{phase_D-mi}}
	\subfigure[$\omega_0>\omega,A=0.8\pi$]{		
		\includegraphics[width=0.20\textwidth]{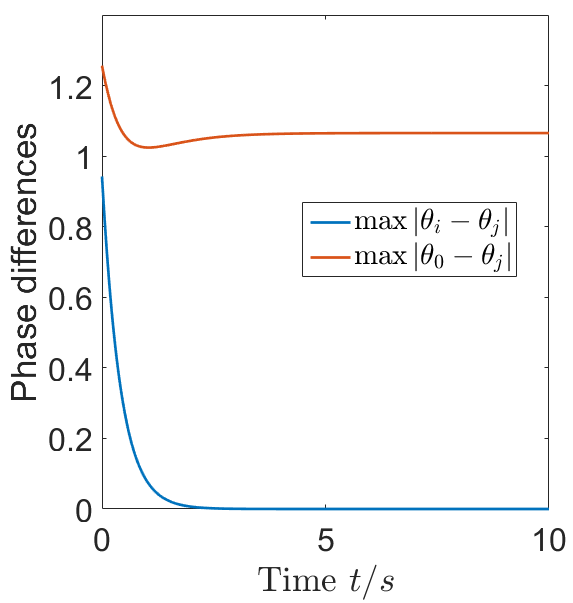}
		\label{phase_D-posi_syn}}	
	\subfigure[$\omega_0>\omega,A=0.4\pi$]{		
		\includegraphics[width=0.20\textwidth]{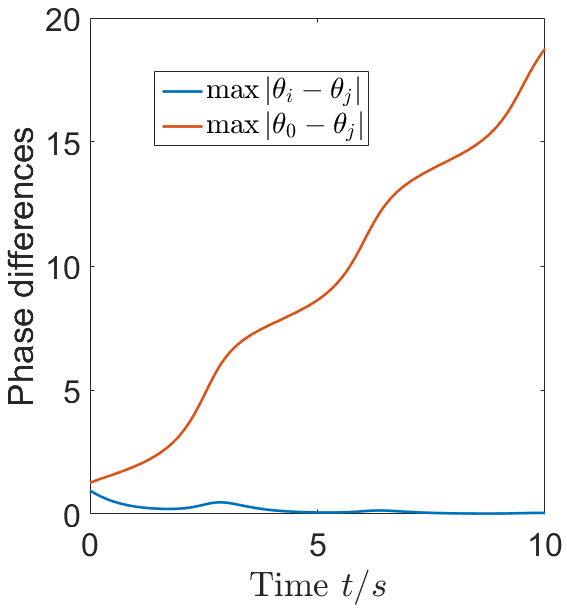}
		\label{phase_D-posi}}		
	\caption{Trajectories of the maximum absolute values of the phase differences when $\alpha=\pi/3$: blue represents $h_1=\max_{i,j\in\N}|\theta_i-\theta_j|$ and red represents $h_2=\max_{i\in\N}|\theta_0-\theta_i|$.}
	\label{Trj:CompThe22}
\end{figure}

	Next, we validate the results in Section \ref{section:4}, where there is a phase shift $\alpha$. Without loss of generality, let $\alpha=\pi/3$.
	First, we consider the case when $\omega_0<\omega$. Let the frequency of each peripheral oscillator be $\omega=0.8\pi$, and the natural frequency of the central one be $\omega_0=0.1\pi$.
	From and the condition \eqref{Nega:condi}, we calculate the threshold of the coupling strength $A$, which is $(\omega-\omega_0)/(2\cos^2 \alpha)=1.4\pi$. Let $A=1.5\pi>1.4\pi$, and we plot the absolution value of phase differences $h_1(\theta(t))$ and $h_2(\theta(t))$ in Fig \ref{phase_D-mi_syn}, from which we observe that remote synchronization is achieved. On the contrary, if we let $A=\pi$, it can be seen from Fig. \ref{phase_D-mi} that remote synchronization does not occur. Finally, we consider the case $\omega_0>\omega$ by  letting $\omega_0=1.5\pi,\omega=0.8\pi$. The threshold given in \eqref{Posi:condi} becomes $(\omega-\omega_0)/(2\cos \alpha)=0.7\pi$. The trajectories of $h_1(t)$ and $h_2(t)$ when $A=0.8 \pi$ and $A=0.4 \pi$ are presented   in Fig. \ref{phase_D-posi_syn} and \ref{phase_D-posi}, respectively. Shown is Fig. \ref{phase_D-posi_syn}, remote synchronization is achieved. Surprisingly, one can observe from Fig. \ref{phase_D-posi} that the phase differences among peripheral oscillators approach zero, although the phase differences between the peripheral and the central oscillators are increasing.  This implies remote synchronization can also take place without requiring that all the frequencies get synchronized. We are currently working on the construction of a mathematical proof for this observed phenomenon.

	\section{Conclusion}\label{section:5}

	Motivated by synchronization observed in distant cortical regions in human brain, especially neuronal synchrony of unconnected areas through relaying, we have studied remote synchronization of Kuramoto oscillators  coupled by a star networks. We have shown that the symmetry of outgoing connections from the central oscillator plays a critical role in facilitating phase synchronization between peripheral oscillators. By carefully adjusting the strengths of these couplings, interesting patterns of stable remote synchronization, such as cluster synchronization, can be achieved. We have also studied the case when there is a phase shift. Sufficient conditions have been obtained to ensure the stability of remote synchronization. Simulations have been performed to validate our results. We are highly interested in generalizing our results on remote synchronization to more complex networks in the future. 

\bibliographystyle{IEEEtran}
\bibliography{IEEEabrv,papers}

\begin{thebibliography}{10}
\providecommand{\url}[1]{#1}
\csname url@samestyle\endcsname
\providecommand{\newblock}{\relax}
\providecommand{\bibinfo}[2]{#2}
\providecommand{\BIBentrySTDinterwordspacing}{\spaceskip=0pt\relax}
\providecommand{\BIBentryALTinterwordstretchfactor}{4}
\providecommand{\BIBentryALTinterwordspacing}{\spaceskip=\fontdimen2\font plus
\BIBentryALTinterwordstretchfactor\fontdimen3\font minus
  \fontdimen4\font\relax}
\providecommand{\BIBforeignlanguage}[2]{{%
\expandafter\ifx\csname l@#1\endcsname\relax
\typeout{** WARNING: IEEEtran.bst: No hyphenation pattern has been}%
\typeout{** loaded for the language `#1'. Using the pattern for}%
\typeout{** the default language instead.}%
\else
\language=\csname l@#1\endcsname
\fi
#2}}
\providecommand{\BIBdecl}{\relax}
\BIBdecl

\bibitem{pikovsky2003synchronization}
A.~Pikovsky, M.~Rosenblum, and J.~Kurths, \emph{Synchronization: A Universal
  Concept in Nonlinear Sciences}.\hskip 1em plus 0.5em minus 0.4em\relax
  Cambridge University Press, 2003, vol.~12.

\bibitem{proskurnikov2017synchronization}
A.~V. Proskurnikov and M.~Cao, ``Synchronization of pulse-coupled oscillators
  and clocks under minimal connectivity assumptions,'' \emph{{IEEE} Trans.
  Autom. Control}, vol.~62, no.~11, pp. 5873--5879, 2017.

\bibitem{strogatz2000kuramoto}
S.~H. Strogatz, ``From {K}uramoto to {C}rawford: Exploring the onset of
  synchronization in populations of coupled oscillators,'' \emph{Physica D:
  Nonlinear Phenomena}, vol. 143, no. 1-4, pp. 1--20, 2000.

\bibitem{xia2018determination}
W.~Xia and M.~Cao, ``Determination of clock synchronization errors in
  distributed networks,'' \emph{SIAM Journal on Control and Optimization},
  vol.~56, no.~2, pp. 610--632, 2018.

\bibitem{strogatz2005theoretical}
S.~H. Strogatz, D.~M. Abrams, A.~McRobie, B.~Eckhardt, and E.~Ott,
  ``Theoretical mechanics: Crowd synchrony on the {M}illennium {B}ridge,''
  \emph{Nature}, vol. 438, no. 7064, p.~43, 2005.

\bibitem{neda2000physics}
Z.~N{\'e}da, E.~Ravasz, T.~Vicsek, Y.~Brechet, and A.-L. Barab{\'a}si,
  ``Physics of the rhythmic applause,'' \emph{Physical Review E}, vol.~61,
  no.~6, p. 6987, 2000.

\bibitem{marvel2009invariant}
S.~A. Marvel and S.~H. Strogatz, ``Invariant submanifold for series arrays of
  {J}osephson junctions,'' \emph{Chaos: An Interdisciplinary Journal of
  Nonlinear Science}, vol.~19, no.~1, p. 013132, 2009.

\bibitem{kuramoto1975self}
Y.~Kuramoto, ``Self-entrainment of a population of coupled non-linear
  oscillators,'' in \emph{International Symposium on Mathematical Problems in
  Theoretical Physics}.\hskip 1em plus 0.5em minus 0.4em\relax Springer, 1975,
  pp. 420--422.

\bibitem{sakaguchi1986soluble}
H.~Sakaguchi and Y.~Kuramoto, ``A soluble active rotater model showing phase
  transitions via mutual entertainment,'' \emph{Progress of Theoretical
  Physics}, vol.~76, no.~3, pp. 576--581, 1986.

\bibitem{womelsdorf2007modulation}
T.~Womelsdorf, J.-M. Schoffelen, R.~Oostenveld, W.~Singer, R.~Desimone, A.~K.
  Engel, and P.~Fries, ``Modulation of neuronal interactions through neuronal
  synchronization,'' \emph{Science}, vol. 316, no. 5831, pp. 1609--1612, 2007.

\bibitem{fries2005mechanism}
P.~Fries, ``A mechanism for cognitive dynamics: Neuronal communication through
  neuronal coherence,'' \emph{Trends in Cognitive Sciences}, vol.~9, no.~10,
  pp. 474--480, 2005.

\bibitem{vuksanovic2014functional}
V.~Vuksanovi{\'c} and P.~H{\"o}vel, ``Functional connectivity of distant
  cortical regions: role of remote synchronization and symmetry in
  interactions,'' \emph{NeuroImage}, vol.~97, pp. 1--8, 2014.

\bibitem{gambuzza2013analysis}
L.~V. Gambuzza, A.~Cardillo, A.~Fiasconaro, L.~Fortuna, J.~G{\'o}mez-Gardenes,
  and M.~Frasca, ``Analysis of remote synchronization in complex networks,''
  \emph{Chaos: An Interdisciplinary Journal of Nonlinear Science}, vol.~23,
  no.~4, p. 043103, 2013.

\bibitem{chopra2009exponential}
N.~Chopra and M.~W. Spong, ``On exponential synchronization of {K}uramoto
  oscillators,'' \emph{{IEEE} Trans. Autom. Control}, vol.~54, no.~2, pp.
  353--357, 2009.

\bibitem{dorfler2011critical}
F.~D{\"o}rfler and F.~Bullo, ``On the critical coupling for {K}uramoto
  oscillators,'' \emph{SIAM Journal on Applied Dynamical Systems}, vol.~10,
  no.~3, pp. 1070--1099, 2011.

\bibitem{Qin2018}
Y.~Qin, Y.~Kawano, and M.~Cao, ``Partial phase cohesiveness in networks of
  communitinized {K}uramoto oscillators,'' in \emph{Proc. IEEE Europ. Control
  Conf.}, Limassol, Cyprus, 2018, pp. 2028--2033.

\bibitem{nicosia2013remote}
V.~Nicosia, M.~Valencia, M.~Chavez, A.~D{\'\i}az-Guilera, and V.~Latora,
  ``Remote synchronization reveals network symmetries and functional modules,''
  \emph{Physical Review Letters}, vol. 110, no.~17, p. 174102, 2013.

\bibitem{vicente2008dynamical}
R.~Vicente, L.~L. Gollo, C.~R. Mirasso, I.~Fischer, and G.~Pipa, ``Dynamical
  relaying can yield zero time lag neuronal synchrony despite long conduction
  delays,'' \emph{Proceedings of the National Academy of Sciences}, vol. 105,
  no.~44, pp. 17\,157--17\,162, 2008.

\bibitem{gollo2010dynamic}
L.~L. Gollo, C.~Mirasso, and A.~E. Villa, ``Dynamic control for synchronization
  of separated cortical areas through thalamic relay,'' \emph{Neuroimage},
  vol.~52, no.~3, pp. 947--955, 2010.

\bibitem{bergner2012remote}
A.~Bergner, M.~Frasca, G.~Sciuto, A.~Buscarino, E.~J. Ngamga, L.~Fortuna, and
  J.~Kurths, ``Remote synchronization in star networks,'' \emph{Physical Review
  E}, vol.~85, no.~2, p. 026208, 2012.

\bibitem{jadbabaie2004stability}
A.~Jadbabaie, N.~Motee, and M.~Barahona, ``On the stability of the {K}uramoto
  model of coupled nonlinear oscillators,'' in \emph{Proc. IEEE Amer. Control
  Conf.}, Boston, MA, USA, 2004, pp. 4296--4301.

\bibitem{wang2013exponential}
Y.~Wang and F.~J. Doyle, ``Exponential synchronization rate of {K}uramoto
  oscillators in the presence of a pacemaker,'' \emph{{IEEE} Trans. Autom.
  Control}, vol.~58, no.~4, pp. 989--994, 2013.

\bibitem{franci2010phase}
A.~Franci, A.~Chaillet, and W.~Pasillas-L{\'e}pine, ``Phase-locking between
  {K}uramoto oscillators: {R}obustness to time-varying natural frequencies,''
  in \emph{Proc. 49th IEEE Conf. Decision Control}, Atlanta, GA, USA, 2010, pp.
  1587--1592.

\bibitem{teramae2004robustness}
J.~N. Teramae and D.~Tanaka, ``Robustness of the noise-induced phase
  synchronization in a general class of limit cycle oscillators,''
  \emph{Physical Review Letters}, vol.~93, no.~20, p. 204103, 2004.

\bibitem{zhou2002noise}
C.~Zhou and J.~Kurths, ``Noise-induced phase synchronization and
  synchronization transitions in chaotic oscillators,'' \emph{Physical Review
  Letters}, vol.~88, no.~23, p. 230602, 2002.

\bibitem{nakao2007noise}
H.~Nakao, K.~Arai, and Y.~Kawamura, ``Noise-induced synchronization and
  clustering in ensembles of uncoupled limit-cycle oscillators,''
  \emph{Physical Review Letters}, vol.~98, no.~18, p. 184101, 2007.

\bibitem{hoppensteadt2012weakly}
F.~C. Hoppensteadt and E.~M. Izhikevich, \emph{Weakly Connected Neural
  Networks}.\hskip 1em plus 0.5em minus 0.4em\relax Springer Science \&
  Business Media, 2012.

\bibitem{vlasov2015star}
V.~Vlasov, A.~Pikovsky, and E.~E. Macau, ``Star-type oscillatory networks with
  generic kuramoto-type coupling: A model for “{J}apanese drums
  synchrony”,'' \emph{Chaos: An Interdisciplinary Journal of Nonlinear
  Science}, vol.~25, no.~12, p. 123120, 2015.

\bibitem{dorfler2014synchronization}
F.~D{\"o}rfler and F.~Bullo, ``Synchronization in complex networks of phase
  oscillators: A survey,'' \emph{Automatica}, vol.~50, no.~6, pp. 1539--1564,
  2014.

\bibitem{dorfler2013synchronization}
F.~D{\"o}rfler, M.~Chertkov, and F.~Bullo, ``Synchronization in complex
  oscillator networks and smart grids,'' \emph{Proceedings of the National
  Academy of Sciences}, vol. 110, no.~6, pp. 2005--2010, 2013.

\bibitem{horn1990matrix}
R.~A. Horn and C.~R. Johnson, \emph{Matrix analysis}.\hskip 1em plus 0.5em
  minus 0.4em\relax Cambridge University Press, 1990.

\bibitem{schaub2016graph}
M.~T. Schaub, N.~O'Clery, and et~al., ``Graph partitions and cluster
  synchronization in networks of oscillators,'' \emph{Chaos: An
  Interdisciplinary Journal of Nonlinear Science}, vol.~26, no.~9, p. 094821,
  2016.

\bibitem{tiberi2017synchronization}
L.~Tiberi, C.~Favaretto, M.~Innocenti, D.~S. Bassett, and F.~Pasqualetti,
  ``Synchronization patterns in networks of {K}uramoto oscillators: A geometric
  approach for analysis and control,'' in \emph{Proc. 56th IEEE Conf. Decision
  Control}, Melbourne, Australia, 2017, pp. 481--486.

\bibitem{montbrio2004synchronization}
E.~Montbri{\'o}, J.~Kurths, and B.~Blasius, ``Synchronization of two
  interacting populations of oscillators,'' \emph{Physical Review E}, vol.~70,
  no.~5, p. 056125, 2004.

\bibitem{winfree2001geometry}
A.~T. Winfree, \emph{The Geometry of Biological Time}.\hskip 1em plus 0.5em
  minus 0.4em\relax Springer Science \& Business Media, 2001.

\end{thebibliography}

\end{document}